\documentclass[a4paper,11pt,reqno]{amsart}

\usepackage[margin=1in,includehead,includefoot]{geometry}
\usepackage[utf8]{inputenc}
\usepackage[T1]{fontenc}
\usepackage{lmodern,microtype}
\usepackage{hyperref}
\usepackage{booktabs}
\usepackage{mathtools,amssymb,amsthm,amsmath}
\usepackage[foot]{amsaddr}
\usepackage{graphicx}
\usepackage[table]{xcolor}
\usepackage[margin=5mm]{caption}
\usepackage{enumitem}
\usepackage[textsize=footnotesize]{todonotes}
\usepackage{mdframed}
\usepackage{xpatch}
\usepackage{dsfont}
\usepackage{nicefrac}
\usepackage{pgfplots}
\usepackage{mycommands}

\graphicspath{{./graphics/}}

\makeatletter
\let\old@setaddresses\@setaddresses
\def\@setaddresses{\bigskip{\parindent 0pt\let\scshape\relax\let\ttfamily\relax\old@setaddresses}}
\makeatother

\newtheorem{theorem}{Theorem}

\newtheorem{lemma}[theorem]{Lemma}
\theoremstyle{remark}

\hypersetup{
  pdftitle={Faster and simpler traversal of 0/1-polytopes},
  pdfauthor={Ji\v{r}\'i Fink, Petr Hlad\'ik, Arturo Merino, Ond\v{r}ej Mi\v{c}ka and Torsten M\"utze}
}

\begin{document}

\title{Faster and simpler traversal of 0/1-polytopes}

\author{Ji\v{r}\'i Fink}
\address[Ji\v{r}\'i Fink]{Department of Theoretical Computer Science and Mathematical Logic, Charles University, Prague, Czech Republic}
\email{fink@ktiml.mff.cuni.cz}

\author{Petr Hlad\'ik}
\address[Petr Hlad\'ik]{Department of Theoretical Computer Science and Mathematical Logic, Charles University, Prague, Czech Republic}
\email{hladik.petr@outlook.com}

\author{Arturo Merino}
\address[Arturo Merino]{Department of Computer Science, Universidad de Chile, Santiago de Chile, Chile}
\email{amerino@dcc.uchile.cl}

\author{Ond\v{r}ej Mi\v{c}ka}
\address[Ond\v{r}ej Mi\v{c}ka]{Department of Theoretical Computer Science and Mathematical Logic, Charles University, Prague, Czech Republic}
\email{micka@ktiml.mff.cuni.cz}

\author{Torsten M\"utze}
\address[Torsten M\"utze]{Institut f\"ur Mathematik, Universit\"at Kassel, Germany}
\email{tmuetze@mathematik.uni-kassel.de}

\thanks{Arturo Merino was supported by ANID FONDECYT Iniciaci\'on No.~11251528.
Torsten M\"utze was supported by German Science Foundation grant~522790373.}

\begin{abstract}
Recently, Merino and M\"utze (FOCS'23+SICOMP'24) presented an algorithm for computing a Hamilton path on the skeleton of any 0/1-polytope $\conv(X)$, where $X\seq\{0,1\}^n$.
The algorithm uses as a black box an algorithm for solving the classical linear optimization problem $\min\{w\cdot x\mid x\in X\}$ for some weight vector~$w\in\mathbb{R}^n$.
The resulting delay per visited vertex on the Hamilton path is only by a $\log n$ factor larger than the time to solve one instance of the optimization algorithm.
In this paper, we make the Hamilton path algorithm simpler and faster.
Namely, we obtain an amortized delay that is only by a constant factor larger than the running time of the optimization algorithm, thus removing the $\log n$ factor.
As concrete results, this yields improved algorithms for generating bases and independent sets in a matroid, spanning trees, forests, matchings and maximum matchings in a graph, vertex covers, minimum vertex covers, independent sets and maximum independent sets in a bipartite graph, and antichains, maximum antichains and ideals in a poset.
All of these listings correspond to Hamilton paths on the corresponding polytopes.
Furthermore, we obtain an $\cO(t_{\upright{LP}})$ amortized delay algorithm for the vertex enumeration problem on 0/1-polytopes $\{x\in\mathbb{R}^n\mid Ax\leq b\}$, where $A\in \mathbb{R}^{m\times n}$ and~$b\in\mathbb{R}^m$, and $t_{\upright{LP}}$ is the time needed to solve the linear program $\min\{w\cdot x\mid Ax\leq b\}$.
This improves upon the $\cO(t_{\upright{LP}} \log n)$ delay algorithm of Merino and M\"utze, and the previous $\cO(t_{\upright{LP}}\,n)$ delay algorithm of Bussieck and L\"ubbecke from~1998.
\end{abstract}

\maketitle

\section{Introduction}

Given a set~$X$ of 0/1-vectors of length~$n$, its convex hull $P\coloneq \conv(X)$ is a \defi{0/1-polytope} in~$\mathbb{R}^n$.
Classical and heavily studied examples are the spanning tree polytope of a graph, or more generally, the base polytope of a matroid~\cite{MR510371}, the matching polytope or the independent set polytope of a graph~\cite{MR371732} (see Figure~\ref{fig:ssp}~(a)), and the chain and order polytope of a poset~\cite{MR824105}.
In each of these cases, we take as the set~$X$ the \defi{indicator vectors} of the corresponding combinatorial objects.
For example, given a spanning tree~$T$ of a graph~$G$ with edges labeled~$1,\ldots,n$, we write $\indicator_{T}\in\{0,1\}^n$ for the vector whose $i$th entry is~1 if the edge~$i$ is contained in the spanning tree~$T$ and~0 otherwise.
Then the spanning tree polytope is defined as $\conv(X)$ for $X\coloneq \{\indicator_{T}\mid T \text{ spanning tree of }G\}$.

0/1-polytopes arising from combinatorial structures such as matroids, graphs and posets have been studied heavily, most prominently in \defi{combinatorial optimization}, where the goal is to find the best vector~$x\in X$ subject to some objective function.
A prototypical task is to solve the \defi{linear optimization problem} $\min\{w\cdot x\mid x\in X\}$, where $w\in\mathbb{R}^n$ are some fixed real weights.
A tremendous amount of previous work has been devoted to finding the best algorithms to solve this problem in the different cases mentioned above where~$X$ encodes a concrete set of combinatorial objects.
This includes the classical problems of minimum spanning tree and minimum weight matching (usually equivalently stated as maximum weight matching).

In addition to being useful for optimization, combinatorial polytopes often encode interesting information about the underlying combinatorial structures.
For example, the volume of the chain polytope and order polytope of a poset are proportional to the number of linear extensions of the poset, so an enumeration problem can be expressed as a volume computation.
Furthermore, the adjacency relation on the polytope, i.e., which pairs of vertices are connected by an edge, defines a natural closeness relation on the combinatorial objects.
This is captured by the \defi{skeleton} of the polytope, which is the graph formed by its vertices and edges (=0- and 1-dimensional faces).
For example, on the spanning tree polytope, edges of the polytope connect pairs of spanning trees that differ in an exchange of a single edge.
Similarly, on the perfect matching polytope, edges of the polytope connect pairs of matchings that differ in an alternating cycle.
On the independent set polytope, edges of the polytope connect pairs of independent sets whose symmetric difference is connected; see Figure~\ref{fig:ssp}~(a).

\begin{figure}[t!]
\includegraphics[page=1]{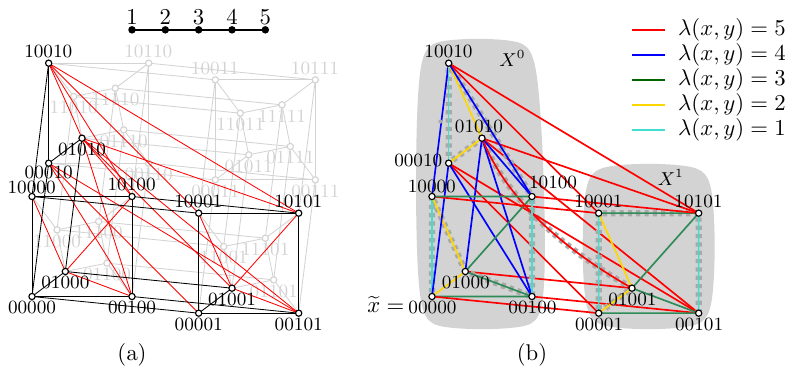}
\caption{(a) Independent set polytope of a 5-vertex path, with vertices numbered~$1,\ldots,5$ from left to right.
Vertices are the indicator vectors of all independent sets of the path, i.e., sets of vertices that do not span any edges.
They correspond to bitstrings of length~5 with no two adjacent 1s.
Distance-1 edges are in black, and the others in red.
(b)~Hamilton path computed by Algorithm~P\ppp{} on the independent set polytope from the left, with initial vertex~$\tx=00000$.
Edges~$\{x,y\}$ are colored according to the values~$\lambda(x,y)$, and the algorithm greedily walks along edges with smallest possible $\lambda$-value that leads to new vertex.
The shadings highlight the two faces of the polytope in which the last bit is either~0 or~1, and the algorithm transitions between them exactly once.
The example in this figure will be used repeatedly throughout this paper.
}
\label{fig:ssp}
\end{figure}

In this paper we focus on the problem of computing a \defi{Hamilton path} on the skeleton of a 0/1-polytope~$P=\conv(X)$, i.e., a path on the skeleton that visits every vertex exactly once.
It has long been known that such a Hamilton path always exists, for any 0/1-polytope~\cite{MR762893}.
In other words, our goal is to list all combinatorial objects encoded by~$X$, each object exactly once, such that any two consecutive objects are connected by an edge of the polytope.
It is noteworthy here that the (often very large) set~{\bf $X$ is not given explicitly as input} to the algorithm, but rather implicitly, otherwise there would be no point in computing it.
E.g., given a graph as input, the task is to compute a Hamilton path on the spanning tree polytope of the graph, or, given a poset as input, the task is to compute a Hamilton path on the order polytope of the poset.
In the different applications, this yields listings of combinatorial objects in which consecutive objects differ only by small changes, i.e., we obtain \defi{combinatorial Gray codes}~\cite{MR1491049,MR4649606}.
These are part of the arsenal of \defi{combinatorial generation}, where the goal is to list all elements of~$X$, each exactly once, as efficiently as possible, which is covered in depth in Vol.~4A of Knuth's seminal series `The Art of Computer Programming'~\cite{MR3444818}.
The running time of a generation algorithm is usually measured by its \defi{delay}, i.e., the time spent between visiting consecutive objects, which can be specified either as an amortized guarantee, or even better, as a worst-case guarantee.
For example, the amortized delay is simply the total time of the algorithm to compute the entire listing, divided by the total number of objects generated.

\subsection{The Merino-M\"utze algorithm}

In a recent paper, Merino and M\"utze~\cite{MR4720318,MR4795009} presented a novel algorithm for computing a Hamilton path on the skeleton of any 0/1-polytope~$P=\conv(X)$, $X\seq\{0,1\}^n$, which has a number of striking features, listed in Section~1.4 of their paper.
Most importantly, their algorithm uses as a black box an algorithm for solving instances of the linear optimization problem over~$X$, called repeatedly with different weight vectors~$w\in\mathbb{R}^n$.
The resulting worst-case delay of their Hamilton path algorithm is then $\log n$ times the running time to solve one instance of the linear optimization problem.
In other words, their result reduces the combinatorial generation problem---in the stronger form of computing a Hamilton path on the associated polytope---to the combinatorial optimization problem on the same polytope, albeit at the cost of an extra $\log n$ factor in the (worst-case) delay, which originates from a binary search.
Their algorithm thus embodies a very versatile general framework to derive a generation algorithm for a set of combinatorial objects by exploiting a state-of-the-art optimization algorithm for the corresponding objects.
Applying this reduction with the best-known optimization algorithms from the literature yields a large number of generation algorithms for concrete combinatorial objects, which in many cases are the best known.
\cite[Table~1]{MR4795009} lists 26 examples explicitly, including bases and independent sets in a matroid, spanning trees, forests, matchings and maximum matchings in a graph, vertex covers, minimum vertex covers, independent sets and maximum independent sets in a bipartite graph, and antichains, maximum antichains and ideals in a poset.

Furthermore, one of their applications is the \defi{vertex enumeration problem}~\cite{MR0060202,MR1174359,MR1785299}, which is fundamental in discrete and computational geometry.
In this problem, we are given a linear system of inequalities~$Ax\leq b$, where $A\in\mathbb{R}^{m\times n}$ and~$b\in\mathbb{R}^m$, and the problem is to compute all vertices of the polytope~$P\coloneq \{x\in\mathbb{R}^n\mid Ax\leq b\}$.
In other words, given the half-space representation of~$P$, we want to compute the vertex-representation of~$P$.
For the special case where $P$ is a 0/1-polytope, i.e., all vertex coordinates are from~$\{0,1\}$, Bussieck and L\"ubbecke~\cite{MR1659922} described an algorithm for generating the vertices of~$P$ with delay~$\cO(t_{\upright{LP}}\,n)$, where $t_{\upright{LP}}$ is the time needed to solve the linear program (LP) $\min\{w\cdot x\mid Ax\leq b\}$ for some weight vector~$w\in\mathbb{R}^n$.
The algorithm of Merino and M\"utze improves upon this, reducing the delay from~$\cO(t_{\upright{LP}}\,n)$ to~$\cO(t_{\upright{LP}}\,\log n)$.
The problem statement does not require that the vertices of the polytope are listed in the order of a Hamilton path, but the algorithm of Merino and M\"utze guarantees this as an additional bonus (on top of being faster).

\subsection{Our results}

Our main contribution is to make the algorithm of Merino and M\"utze simpler and faster.
Specifically, the new algorithm to compute a Hamilton path on the skeleton of any 0/1-polytope has an amortized delay that is only a constant times the running time to solve one instance of the linear optimization problem (see Algorithm~P\ppp{} and Theorem~\ref{thm:main} below).
The constant is relatively small, namely~4.
Informally, this shows that combinatorial generation is (up to constant factors) not harder than combinatorial optimization.
This improvement is achieved by slightly modifying one of the auxiliary optimization problems, such that each solution reveals more information to the Hamilton path algorithm.
Maintaining this information in a stack enables us to get rid of the binary search needed in the original reduction, thus removing the $\log n$ factor in the delay.
This improves the running times for each of the 26 generation problems stated in~\cite[Table~1]{MR4795009} by a factor of~$\log n$.
In particular, we obtain an algorithm for the vertex enumeration problem on 0/1-polytopes with amortized delay~$\cO(t_{\upright{LP}})$, further improving on the aforementioned result of Bussieck and L\"ubbecke.
Notably, our new algorithm retains all the other features advertised in Section~1.4 of~\cite{MR4795009}, in particular, it is very simple and easy to implement with few lines of code, and it performs neither polyhedral nor numerical computations, but only purely combinatorial operations.

In our new result the constant factor bound on the delay holds in an amortized sense, whereas in the previous result the $\log n$ factor bound on the delay holds in the worst case.
This distinction is mostly a technical artefact, as standard buffering techniques can be applied to turn amortized delay into worst-case delay, though at the cost of increasing the total running time (cf.~\cite{MR3627876,MR4075363,MR4587202}).

\subsection{Outline of this paper}

In Section~\ref{sec:algo} we present the new Hamilton path algorithm and prove its correctness (Theorem~\ref{thm:algo}).
In Section~\ref{sec:opt} we establish the link to combinatorial optimization, thus proving the amortized delay bounds for the algorithm (Theorem~\ref{thm:main}).
In Section~\ref{sec:exp} we present a computational study to compare the performance of the Bussieck-L\"ubbecke algorithm, the Merino-M\"utze algorithm and our new algorithm when solving the vertex enumeration problem on a variety of combinatorial 0/1-polytopes.

\section{The algorithm}
\label{sec:algo}

In this section we present Algorithm~P\ppp{} to compute a Hamilton path on the skeleton of any 0/1-polytope.
The underlying principles of the algorithm and the Hamilton path it computes are identical to Algorithm~P\sss{} presented in~\cite{MR4795009}, and we elaborate on the differences and resulting improvements later.
However, in order to keep the paper self-contained, we provide an independent and streamlined proof here.

\subsection{Preliminaries}

For integers $i$ and~$j$ we define the \defi{interval} $[i,j]\coloneq \{i,i+1,\ldots,j\}$.
If $i>j$ we have~$[i,j]=\emptyset$.
We also define~$[n]\coloneq [1,n]=\{1,\ldots,n\}$.

Throughout this paper, $X\seq\{0,1\}^n$ denotes the set of vertices of a 0/1-polytope~$P\coloneq \conv(X)$ on which we aim to compute a Hamilton path.
Often, we think of~$X$ as the set of indicator vectors encoding a set of combinatorial objects.
For a predicate~$P$ and a real-valued function~$f$ defined on elements in~$X$ that satisfy the predicate~$P$, we write
\[ \argmin[P(x)\mid f(x)]\coloneq \argmin_{P(x)} f(x) \]
for the minimizers of~$f$ on this subset of~$X$.
For an element~$x\in X$ we define $X-x\coloneq X\setminus\{x\}$.

We write $\varepsilon$ for the empty string.
For a bitstring $x=(x_1,\ldots,x_n)\in\{0,1\}^n$ and integers $1\leq i\leq j\leq n$ we write $x_{[i,j]}\coloneq (x_i,x_{i+1},\ldots,x_j)$ for the substring of~$x$ from position~$i$ to~$j$.
Furthermore, we define
\[ S(x)\coloneq (\varepsilon,x_{[n,n]},x_{[n-1,n]},\ldots,x_{[2,n]},x_{[1,n]}) \]
as the sequence of all suffixes of~$x$, ordered by increasing lengths, starting with length~0 (which is the empty string) and ending with length~$n$ (which is the entire string).
Given two bitstrings~$x,y\in\{0,1\}^n$, we write $d(x,y)$ for their \defi{Hamming distance}, i.e., the number of bits in which they differ.
Furthermore, for two distinct bitstrings~$x,y\in\{0,1\}^n$ we define $\lambda(x,y)\coloneq \max\{i\in[n]\mid x_i\neq y_i\}$, which is the rightmost (=maximum) index in which they differ.

For a bitstring~$x=(x_1,\ldots,x_n)$ we write~$x^-\coloneq (x_1,\ldots,x_{n-1})$ for the string obtained by deleting the last bit.
For a set~$X\seq\{0,1\}^n$ and $b\in\{0,1\}$ we define $X^b\coloneq \{x\in X\mid x_n=b\}$.
We also define $X^-\coloneq\{x^-\mid x\in X\}$ for the set of strings obtained by deleting the last bit in each of them.
We also write $X^{b-}\coloneq (X^b)^-$.

A \defi{listing}~$L=x_1,\ldots,x_\ell$ of~$X\seq\{0,1\}^n$ is a total ordering of the bitstrings in~$X$, i.e., each element appears exactly once in~$L$.
For a bit~$b\in\{0,1\}$ we write $L^b$ for the subsequence of~$L$ given by all bitstrings whose last bit equals~$b$.
We also define $L^-\coloneq x_1^-,\ldots,x_\ell^-$, i.e., this is the listing of bitstrings of length~$n-1$ obtained by deleting the last bit of each entry of~$L$.
Furthermore, we write $L^{b-}\coloneq (L^b)^-$.

\subsection{Genlex listings and suffix trees}

A listing~$L$ of~$X\seq\{0,1\}^n$ is \defi{genlex}, if all bitstrings with the same suffix appear consecutively in~$L$.
Equivalently, this means that $L$ either has the form $L=L^0,L^1$ or $L=L^1,L^0$ and both~$L^{0-}$ and~$L^{1-}$ are genlex listings.
In words, all bitstrings ending with~0 appear before all bitstrings ending with~1, or vice versa, and this property holds recursively within each block in which the same last bit is removed.
Clearly, this is a generalization of colexicographic order, in which only the form~$L=L^0,L^1$ is allowed, i.e., 0 always appears before~1.
Unlike colexicographic order, which is unique for a set~$X$, there are in general many different genlex orderings for a given set~$X$.

\begin{figure}[b!]
\begin{tabular}{ccc}
\includegraphics[page=1]{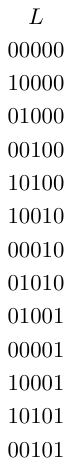} &
\includegraphics[page=2]{tree2} &
\includegraphics[page=4]{tree2} \\
(a) & (b) & (c)
\end{tabular}
\caption{(a)~The genlex listing~$L$ of independent sets of the 5-vertex path from Figure~\ref{fig:ssp}.
(b)~The binary tree structure in~$L$.
(c)~The corresponding suffix tree~$\cT(L)$.
One of its root-to-leaf paths is highlighted, together with the corresponding branchings and unseen branchings.}
\label{fig:suffix-tree}
\end{figure}

The key feature of our algorithm is that it computes a Hamilton path on a 0/1-polytope $\conv(X)$ in genlex order.
In polytope language, this means that the algorithm first visits all vertices on the face~$\conv(X^0)$ and then all vertices on the face~$\conv(X^1)$, or vice versa, depending on the choice of starting vertex, with only a single transition of the last bit; see Figure~\ref{fig:ssp}~(b).
Clearly, each of these two faces is itself a lower-dimensional 0/1-polytope, so the same recursive strategy applies to it.

Given~$X\seq\{0,1\}^n$, we define the \defi{suffix tree of~$X$}, denoted~$\cT(X)$, as an unordered rooted tree whose vertices in depth~$k$ are all possible suffixes of length~$k$ appearing in~$X$, and the parent-child relation is the suffix relation.
In particular, the root of~$\cT(X)$ is the empty string~$\varepsilon$, and the set of leaves of~$\cT(X)$ equals~$X$.
Equivalently, $\cT(X)$ has the empty string~$\varepsilon$ as root, and its children are the suffix trees obtained from~$\cT(X^{0-})$ and~$\cT(X^{1-})$ by appending a 0-bit or 1-bit to each of their vertices, respectively, provided that the sets~$X^0$ and~$X^1$ are non-empty (otherwise the corresponding subtrees are missing).
Given any genlex listing~$L$ of~$X$, the \defi{suffix tree of~$L$}, denoted $\cT(L)$, is an ordered rooted tree obtained from~$\cT(X)$ by ordering subtrees such that the leaves of the tree appear exactly in the order~$L$; see Figure~\ref{fig:suffix-tree}.
We observe that~$\cT(L)$ is obtained from~$\cT(X)$ by choosing, at every vertex with two children, one of two possible orderings of subtrees.
Also note that for any~$x\in X$, the sequence~$S(x)$ of all suffixes of~$x$ defines the unique root-to-leaf path from~$\varepsilon$ to~$x$ in the tree~$\cT(X)$.

Given a bitstring~$x\in X$, we define its \defi{set of branchings} as $B(x)\coloneq \{\lambda(x,y)\mid y\in X-x\}$.
In words, this is the set of prefix lengths of~$x$ that can be modified to reach another element of~$X$ (different from~$x$).
In the suffix tree~$\cT(X)$ we have the following:
The integer~$\lambda(x,y)$ is the distance from the leaves~$x$ and~$y$ at which the root-to-leaf paths~$S(x)$ and~$S(y)$ split.
Thus, $B(x)$ contains the distances from~$x$ to all vertices on the root-to-leaf path~$S(x)$ that have two children, one of them lying on the path and the other one not.

Given a genlex listing~$L'=x_1,\ldots,x_i$ of a subset~$X'\seq X$ (not necessarily all of them), the set of \defi{unseen branchings} is defined as $B(L')=B(x_1,\ldots,x_i)\coloneq \{\lambda(x_i,y)\mid y\in X\setminus \{x_1,\ldots,x_i\}\}$; see Figure~\ref{fig:suffix-tree}~(c).
In words, this is the set of prefix lengths of~$x_i$ that can be modified to reach an element of~$X$ that is not part of the listing~$L'$.
Formally, for any genlex listing~$L$ of~$X$ that extends~$L'$, the suffix tree~$\cT(L')$ is a subtree of~$\cT(L)$ to the left of the root-to-leaf path~$S(x_i)$ (which is the rightmost path in~$\cT(L')$), i.e., all such suffix trees~$\cT(L)$ differ only in swapping subtrees to the right of the path~$S(x_i)$, whereas everything to the left of the path is fixed.

\subsection{The algorithm}

The idea of the algorithm to compute a Hamilton path on the skeleton of a 0/1-polytope~$\conv(X)$ is to traverse the (unordered) suffix tree~$\cT(X)$ by a depth-first search, thus turning it into an (ordered) suffix tree~$\cT(L)$ step by step, where $L$ is the resulting genlex Hamilton path.
Given the partial Hamilton path~$x_1,\ldots,x_i$ computed up to some point, the algorithm greedily takes the smallest unseen branching in~$B(x_1,\ldots,x_i)$ to move from the current root-to-leaf path~$S(x_i)$ to the next vertex~$x_{i+1}$ and its corresponding path~$S(x_{i+1})$.
This corresponds to changing~$x_i$ in the shortest possible prefix to reach a new vertex~$x_{i+1}$.

Consider now the pseudocode of the Algorithm~P\ppp{} and the auxiliary function~$\branching()$ stated below.
The initial vertex~$\tx\in X$ of the Hamilton path is provided as input to the algorithm.
We first explain the main algorithm, whereas the auxiliary function is discussed subsequently.
The algorithm consists of an initialization step~P1, and a loop spanning the remaining steps~P2--P6.
The variable~$x$ stores the current vertex on the Hamilton path, and it is initialized in step~P1.
In step~P2, the current vertex~$x$ is visited.
In steps~P3--P5, the next vertex~$y$ on the Hamilton path is computed, and then assigned to~$x$ (step~P5).
After some data structure updates in step~P6, the loop goes back to step~P2.

The main data structure used by the algorithm is a stack~$U$ that contains all unseen branchings of the current vertex~$x$ in decreasing order.
Specifically, if $x_1,\ldots,x_i$ are the vertices visited in the first $i$ iterations of the main loop, then at the beginning of the $i$th iteration the stack~$U$ contains all elements of~$B(x_1,\ldots,x_i)$, sorted decreasingly, i.e., the smallest element at the top of the stack and the largest element at the bottom.
In particular, if~$U$ is empty, then there are no unseen branchings, so all of~$X$ has been visited, causing the algorithm to terminate (step~P3).

We now discuss steps~P3--P5 that compute a neighbor~$y$ of~$x$ on the skeleton of~$\conv(X)$.
In step~P3, the smallest unseen branching~$\beta$ is popped from the stack~$U$.
This is the bit position that will be flipped in~$x$ (plus possibly some earlier bits) to reach a vertex~$y\in X-x$ that is new, i.e., that is not among the vertices visited so far, by the definition of unseen branchings.

\begin{algo}{Algorithm~P\ppp{}}{History-free traversal of 0/1-polytope by shortest prefix changes}
For a set $X\seq \{0,1\}^n$, this algorithm greedily computes a Hamilton path on the skeleton of the 0/1-polytope $\conv(X)$, starting from an initial vertex~$\tx$.
\begin{enumerate}[label={\bfseries P\arabic*.}, leftmargin=8mm, noitemsep, topsep=3pt plus 3pt]
\item{} [Initialize] Set $x \gets \tx$ and call $\branching([1,n])$.
\item{} [Visit] Visit~$x$.
\item{} [Min.\ unseen branching] Terminate if $U$ is empty.
Otherwise set $\beta\!\gets\! U.\pop()$.
\item{} [Closest vertices] Compute the set~$N$ of vertices~$y$ with $\lambda(x,y)=\beta$ of minimum Hamming distance from~$x$, i.e., $N\gets \argmin[y\in X-x \,\wedge\, \lambda(x,y)=\beta \mid d(x,y)]$.
\item{} [Tiebreaker+update~$x$] Pick an arbitrary vertex $y \in N$ and set $x\gets y$.
\item{} [Update~$U$] Call $\branching([1,\beta-1])$ and goto~P2.
\end{enumerate}
\end{algo}

\begin{algo}{Function~$\branching(I)$}{Update stack}
Computes the set~$B(x)\cap I$ and pushes the elements onto the stack~$U$ in decreasing order.
\begin{enumerate}[label={\bfseries B\arabic*.}, leftmargin=8mm, noitemsep, topsep=3pt plus 3pt]
\item{} [Pick one branching] Return if $X(x,I)=\emptyset$.
Otherwise pick an arbitrary vertex $y\in X(x,I)$.
\item{} [Recurse] Set $\beta\leftarrow \lambda(x,y)$ and $[\alpha,\gamma]\leftarrow I$.
Then call $\branching([\beta+1,\gamma])$, $U.\push(\beta)$, and call $\branching([\alpha,\beta-1])$.
\end{enumerate}
\end{algo}
We emphasize that all such vertices $y\in X-x$ with $\lambda(x,y)=\beta$ are new, and in step~P4 the algorithm computes the subset~$N$ of them with minimum Hamming distance from~$x$.
This is the one and only place in the algorithm where geometry in~$\mathbb{R}^n$ enters the picture.
Namely, an easy and well-known sufficient condition from polytope theory (see \cite[Lemma~14]{MR4795009} and \cite[Proposition~2.3]{MR762893}) guarantees that a vertex~$y$ with minimum Hamming distance from~$x$ has the property that~$\{x,y\}$ is indeed an edge of the polytope~$\conv(X)$.
There can be several vertices with minimum Hamming distance from~$x$, i.e., $N$ may contain more than one element.
Any one of them is new and connected to~$x$ by an edge, and thus eligible to become the next vertex~$y\in N$, i.e., ties can be broken arbitrarily (step~P5).

It remains to discuss the auxiliary function~$\branching()$ called in step~P6, and also for initialization in step~P1.
Its purpose is to update the stack~$U$ of unseen branchings.
Specifically, the call~$\branching(I)$ for a subinterval~$I$ of~$[n]=[1,n]$ computes the set~$B(x)\cap I$ (where $x$ is the vertex to be visited in the next visit step~P2, i.e., after the update $x\leftarrow y$ in step~P5) and pushes the elements of this set onto the stack in decreasing order (smallest element on top).
Note that $B(x)\cap I$ is the set of \emph{all} branchings of~$x$ inside the interval~$I$, not just the unseen ones, but we shall see momentarily that these are in fact the same for the intervals~$I$ for which the function is called.
Before we explain how the function~$\branching()$ works internally, let us explain how it is being called from~Algorithm~P\ppp{} in order to update the stack~$U$ correctly.

\begin{figure}[b!]
\includegraphics[page=5,scale=0.78]{tree2}
\caption{Protocol of Algorithm~P\ppp{} when computing the Hamilton path from Figure~\ref{fig:ssp}~(b).
Each line shows the state of the variables~$x,U$ in step~P3, and the values of~$\beta$ and~$N$ as computed in steps~P4 and~P5, respectively.
The protocols of the calls to the function~$\branching()$ are shown at the right hand side.
Whenever such a call incurs two recursive calls, the first one is indicated with a black bullet, the second one by a gray bullet, and the push operation to the stack in between is highlighted by an arrow.
}
\label{fig:protocol}
\end{figure}

At the beginning, when no vertex has been visited yet, all branchings are unseen, which is why the call~$\branching([1,n])$ in the initialization step~P1 is correct.
Now consider the $i$th iteration of the main loop, and let $x_1,\ldots,x_i$ be the vertices visited in the first $i$ iterations.
The variable~$x$ contains the value~$x_i$, and is assigned some~$y\in N$ in step~P5, which becomes the next vertex~$x_{i+1}$ to be visited.
Our invariant is that at the beginning of the $i$th iteration~$U$ contains all elements of~$B(x_1,\ldots,x_i)$ in decreasing order.
The smallest unseen branching~$\beta$ is popped from the stack~$U$ in step~P3, reflecting the fact that $\beta\notin B(x_1,\ldots,x_i,x_{i+1})$, namely, the branching~$\beta$ is not unseen anymore.
We note that the root-to-leaf paths~$S(x_i)$ and~$S(x_{i+1})$ split at distance~$\beta$ from the leaves in the suffix tree~$\cT(X)$, and they coincide on all vertices closer to the root.
Consequently, we have
\[ B(x_1,\ldots,x_i,x_{i+1})\cap [\beta+1,n]=B(x_1,\ldots,x_i)\cap [\beta+1,n], \]
i.e., all unseen branchings larger than~$\beta$ remain unchanged and valid on the stack~$U$.
All branchings smaller than~$\beta$, on the other hand, are unseen by definition, which justifies the call~$\branching([1,\beta-1])$ in step~P6.

We now explain how $\branching(I)$ computes the set~$B(x)\cap I$.
For reasons explained in the next section, the function does not compute~$B(x)$ directly, but rather the auxiliary set
\begin{equation}
\label{eq:XxI}
X(x,I):=\{y\in X-x\mid \lambda(x,y)\in I\}
\end{equation}
(step~B1).
This is the subset of all bitstrings~$y$ in~$X$ that are distinct from~$x$ and can be obtained from~$x$ by flipping a bit at a position inside the interval~$I$ and possibly some earlier bits.
In the suffix tree~$\cT(X)$, these are leaves that are reachable via a path that splits from the root-to-leaf path~$S(x)$ at a vertex whose distance from~$x$ is inside the interval~$I$.
We note that $X(x,I)=\emptyset$ if and only if $B(x)\cap I=\emptyset$.
Furthermore, for every $\beta\in B(x)\cap I$ there is a corresponding $y\in X(x,I)$ with $\lambda(x,y)=\beta$.
For any such~$y\in X(x,I)$ (chosen arbitrarily!) we can thus split the interval~$I\eqcolon [\alpha,\gamma]$ into two subintervals~$[\alpha,\beta-1]$ and~$[\beta+1,\gamma]$, which satisfy
\[ B(x)\cap [\alpha,\gamma]=(B(x)\cap [\alpha,\beta-1])\,\cup\,\{\beta\}\,\cup\, (B(x)\cap [\beta+1,\gamma]). \]
This is the recursive relation used in step~B2, where the recursive calls on the subintervals~$[\alpha,\beta-1]$ and~$[\beta+1,\gamma]$ are performed in reverse order, interspersed with pushing~$\beta$ onto the stack~$U$, to maintain the property that elements on the stack are sorted decreasingly.

Summarizing this discussion, we obtain the following result (cf.\ \cite[Theorem~18]{MR4795009}).

\begin{theorem}
\label{thm:algo}
Let $X\seq\{0,1\}^n$.
For every tiebreaking rule and every initial vertex~$\tx$, Algorithm~\upright{P\ppp{}} computes a genlex Hamilton path on the skeleton of~$\conv(X)$ starting at~$\tx$.
\end{theorem}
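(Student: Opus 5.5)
We argue by induction on the number of iterations of the main loop, maintaining three invariants at the beginning of the $i$th iteration (step~P3), where $x_1,\ldots,x_i$ are the vertices visited so far.
\begin{enumerate}
\item[(a)] The vertices $x_1,\ldots,x_i$ are distinct, and $x_1=\tx$.
\item[(b)] Consecutive vertices $x_j,x_{j+1}$ span an edge of $\conv(X)$.
\item[(c)] $x_1,\ldots,x_i$ is a genlex listing of $\{x_1,\ldots,x_i\}$ that can be extended to a genlex listing of $X$, and the stack $U$ contains exactly $B(x_1,\ldots,x_i)$ in decreasing order.
\end{enumerate}
Given these, termination in step~P3 happens exactly when $B(x_1,\ldots,x_i)=\emptyset$. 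This means every $y\in X$ lies in $\{x_1,\ldots,x_i\}$, so the output is a Hamilton path. Since each iteration visits a new vertex and $X$ is finite, the algorithm does terminate.

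\textbf{Correctness of $\branching(I)$.} I would first prove, by induction on $|I|$, that a call $\branching(I)$ pushes exactly $B(x)\cap I$ onto $U$ in decreasing order. Step~B1 returns without pushing iff $X(x,I)=\emptyset$, which holds iff $B(x)\cap I=\emptyset$. Otherwise the chosen $y$ gives $\beta=\lambda(x,y)\in B(x)\cap I$. The displayed decomposition of $B(x)\cap[\alpha,\gamma]$ then shows that the three pushes occur in the order: elements of $[\beta+1,\gamma]$ (decreasing, by induction), then $\beta$, then elements of $[\alpha,\beta-1]$. This is decreasing overall. With this, step~P1 establishes the invariants for $i=1$, because $B(\tx)=B(x_1)$.

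\textbf{Inductive step.} Let $\beta$ be the popped top of $U$, the minimum of $B(x_1,\ldots,x_i)$.
\begin{enumerate}
\item[(i)] $N$ is nonempty and consists of new vertices. Since $\beta$ is unseen, some unvisited $y$ has $\lambda(x_i,y)=\beta$. We must show that \emph{every} $y$ with $\lambda(x_i,y)=\beta$ is unvisited. This is where genlex matters. The vertices $y$ with $\lambda(x_i,y)=\beta$ are exactly the leaves of the sibling subtree hanging off $S(x_i)$ at distance $\beta$ from the leaf. If some such $y$ were visited, then by the genlex property of $x_1,\ldots,x_i$ (visited leaves with a common suffix are consecutive) the listing would have entered that sibling subtree and then left it to come back to $x_i$'s side. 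That forces all leaves of the subtree to be visited, contradicting that one is unvisited. Alternatively, minimality of $\beta$ together with (c) gives that all leaves below the branching at $\beta$ on the other side are unvisited. I expect this bookkeeping to be the main obstacle. I would handle it by the explicit ordered suffix tree description: $\cT(x_1,\ldots,x_i)$ lies left of $S(x_i)$, so any subtree branching off $S(x_i)$ to the right is entirely unvisited.
\item[(ii)] Edge property. For $y\in N$, $\{x_i,y\}$ is an edge of $\conv(X)$ by the cited sufficient condition \cite[Lemma~14]{MR4795009}. Concretely, I would apply it to the face $\conv(\{z\in X\mid z_{[\beta,n]}=y_{[\beta,n]} \text{ or } z_{[\beta,n]}=x_{i,[\beta,n]}\})$ and argue that a Hamming-closest vertex in the other half of this face is adjacent.
\item[(iii)] Genlex and stack update. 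Moving to $x_{i+1}=y$ keeps the listing genlex. The suffix $x_{i,[\beta+1,n]}$ is shared, and all subtrees branching off $S(x_i)$ below $\beta$ are fully visited by minimality of $\beta$. So the block of suffix $x_{i,[\beta,n]}$ is closed, and we enter the sibling block for the first time. For the stack, the text already justifies $B(x_1,\ldots,x_{i+1})\cap[\beta+1,n]=B(x_1,\ldots,x_i)\cap[\beta+1,n]$, and $\beta\notin B(x_1,\ldots,x_{i+1})$ because the split at $\beta$ is now consumed; the latter needs a one-line check that the only other side at $\beta$ for $x_{i+1}$ is the already closed block. Every branching of $x_{i+1}$ in $[1,\beta-1]$ leads into the freshly entered subtree, where only $x_{i+1}$ is visited, so all of these branchings are unseen. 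Hence the call $\branching([1,\beta-1])$ in step~P6 pushes exactly the missing part, on top of the larger elements, keeping $U$ decreasing.
\end{enumerate}
This completes the induction, and the theorem follows.
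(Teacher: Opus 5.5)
Your proposal is correct and follows the same route as the paper. It uses the invariant that $U$ holds $B(x_1,\ldots,x_i)$ in decreasing order, proves the stack-update function correct recursively by splitting $[\alpha,\gamma]$ at $\beta$, observes that branchings in $[\beta+1,n]$ are unchanged while all those in $[1,\beta-1]$ are unseen, and cites the Hamming-distance adjacency lemma for the edge property.

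Your invariant that $x_1,\ldots,x_i$ extends to a genlex listing of $X$ makes explicit what the paper conveys through the suffix-tree picture. Note that it is this extendability, not the genlex property of the prefix alone, that guarantees in step (i) that every $y$ with $\lambda(x_i,y)=\beta$ is unvisited.
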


Figure~\ref{fig:protocol} shows how Algorithm~P\ppp{} computes the Hamilton path from Figure~\ref{fig:ssp}~(b).
We encourage the reader to work through this example in order to appreciate the inner workings of the algorithm.

\section{The bridge to combinatorial optimization}
\label{sec:opt}

From the pseudocode before we can extract the following two computational problems, which, if solved efficiently, directly lead to an efficient implementation of Algorithm~P\ppp{}:
\begin{enumerate}[leftmargin=8mm, noitemsep, topsep=1pt plus 1pt]
\item[\mybox{A\pppm{}}] Given a set $X\seq\{0,1\}^n$, an element~$x \in X$ and an interval~$I\seq [n]$, compute an element in the set~$X(x,I)$ defined in~\eqref{eq:XxI}, or decide that $X(x,I)=\emptyset$.
\item[\mybox{C}]  Given a set $X\seq\{0,1\}^n$, an element~$x \in X$ and an integer~$\beta \in [n]$ with
$N:=\argmin[y\in X-x \,\wedge\, \lambda(x,y)=\beta \mid d(x,y)]\neq\emptyset$,
compute an element in~$N$.
\end{enumerate}

Problems~A\pppm{} and~C can be reduced to instances of the following optimization problem, referred to as \defi{linear optimization with prescription}.
\begin{enumerate}[label=\mybox{LOP},leftmargin=12mm, noitemsep, topsep=1pt plus 1pt]
\item Given a set~$X\seq\{0,1\}^n$, a weight vector~$w\in W^n$ with $W\seq\mathbb{R}$, and disjoint sets $P_0,P_1\seq [n]$, compute an element in
$N:=\argmin[y\in X\,\wedge\, y_{P_0}=0 \,\wedge\, y_{P_1}=1 \mid w\cdot y]$,
or decide that this problem is infeasible, i.e., $N=\emptyset$.
\end{enumerate}
In this description we use $y_{P_b}=b$ for $b\in\{0,1\}$ as a shorthand notation for $y_i=b$ for all $i\in P_b$.
We refer to~$W$ as the \defi{weight set}.

We prove the following reduction from problem~A\pppm{} to problem~LOP.

\begin{lemma}
\label{lem:algoAp-LOP}
Suppose that problem~\upright{LOP} with weight set~$W=\{-1,0,+1\}$ can be solved in time~$t_{\upright{LOP}}=\Omega(n)$.
Then problem~\upright{A\pppm{}} can be solved in time~$\cO(t_{\upright{LOP}})$.
\end{lemma}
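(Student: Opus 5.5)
Write $I=[\alpha,\gamma]$. The goal is to find some $y\in X-x$ whose rightmost differing position $\lambda(x,y)$ lies in $I$, using a single call to LOP (plus $\cO(n)$ overhead). The condition $\lambda(x,y)\in[\alpha,\gamma]$ splits into two parts. First, $y$ agrees with $x$ on $[\gamma+1,n]$, which we express as a prescription: $P_0:=\{i\in[\gamma+1,n]\mid x_i=0\}$ and $P_1:=\{i\in[\gamma+1,n]\mid x_i=1\}$. Second, $y$ differs from $x$ somewhere in $[\alpha,\gamma]$. This cannot be prescribed directly, so we encode it in the weights: minimizing $w\cdot y$ should prefer $y$ that differs from $x$ inside $I$.

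Define $w_i:=+1$ if $i\in I$ and $x_i=1$, $w_i:=-1$ if $i\in I$ and $x_i=0$, and $w_i:=0$ for $i\notin I$. Then for every $y\in\{0,1\}^n$,
\[ w\cdot y = w\cdot x + \bigl|\{i\in I\mid y_i\neq x_i\}\bigr|\cdot(-1)\quad\text{adjusted,} \]
more precisely $w\cdot y-w\cdot x=-|\{i\in I\mid y_i\neq x_i\}|$. This holds because flipping a 1 to 0 at a position with weight $+1$ decreases the value by 1, and flipping a 0 to 1 at weight $-1$ also decreases it by 1. So $w\cdot y<w\cdot x$ holds exactly when $y$ differs from $x$ somewhere in $I$.

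The procedure is as follows. Solve LOP with $(w,P_0,P_1)$. The instance is feasible, since $x$ itself satisfies the prescription. Let $z$ be the returned minimizer.
\begin{itemize}
\item If $w\cdot z<w\cdot x$, then $z\neq x$, $z$ agrees with $x$ on $[\gamma+1,n]$, and $z$ differs from $x$ in $I$. Hence $\lambda(x,z)\in I$, and we output $z$.
\item Otherwise every feasible $y$ agrees with $x$ on all of $[\alpha,n]$. Then $\lambda(x,y)<\alpha$ for every $y\neq x$ in the feasible region, and every $y$ outside it has $\lambda(x,y)>\gamma$. So $X(x,I)=\emptyset$.
\end{itemize}
Building $w$, $P_0$ and $P_1$, evaluating the two dot products, and comparing them all take $\cO(n)$ time. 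Since $t_{\mathrm{LOP}}=\Omega(n)$, the total is $\cO(t_{\mathrm{LOP}})$.

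No step is really an obstacle. The main point is choosing the sign pattern of $w$ so that the minimum strictly improves on $x$ exactly when some disagreement in $I$ is possible. One must also check the degenerate case where $I$ is empty or the prescription covers everything. For empty $I$ we get $w=0$, so no strict improvement is possible and the procedure correctly reports emptiness.
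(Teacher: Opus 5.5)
Your proof is correct and follows the paper's reduction: prescribe $x$'s bits beyond $\max I$, use $\pm1$ weights that reward disagreement with $x$, and compare the optimum with $w\cdot x$. The only difference is cosmetic. The paper puts the $\pm1$ weights on all positions $i\geq\min I$ rather than only on $I$, which changes nothing because positions beyond $\max I$ are fixed by the prescription, and your identity $w\cdot y-w\cdot x=-|\{i\in I\mid y_i\neq x_i\}|$ makes the equivalence slightly more explicit.
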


\begin{proof}
Consider a set~$X\seq\{0,1\}^n$, an element~$x\in X$ and an interval~$I\seq [n]$ as input for problem~A\pppm{}.
We define
\begin{equation*}
w_i:=\begin{cases}
-1 & \text{if } i\geq \min I\text{ and } x_i=0, \\
+1 & \text{if } i\geq \min I\text{ and } x_i=1, \\
0  & \text{if } i<\min I,
\end{cases}
\end{equation*}
and
\begin{equation*}
P_b:=\{i>\max I\mid x_i=b\} \text{ for } b\in\{0,1\}.
\end{equation*}
We claim that
\[\mu:=\min\limits_{y\in X\,\wedge\,y_{P_0}=0\,\wedge\,y_{P_1}=1} w\cdot y<w\cdot x=:a\]
if and only if~$X(x,I)\neq \emptyset$ holds for the set~$X(x,I)$ defined in~\eqref{eq:XxI}.
Indeed, if $\mu<a$, then there is a~$y^*\in X$ with $y^*_{P_0}=0$, $y^*_{P_1}=1$, and~$y^*_i\neq x_i$ for some $i\in I$.
It follows that $y^*\in X(x,I)$, which implies~$X(x,I)\neq \emptyset$.
Conversely, if~$X(x,I)\neq\emptyset$, then there is $y^*\in X-x$ with $\lambda(x,y^*)\in I$, i.e., there is a position $i\in I$ such that $y^*_i\neq x_i$ and $y^*_j=x_j$ for all $j>i$, in particular $y^*_{P_0}=0$ and $y^*_{P_1}=1$.
As disagreement from~$x$ on the interval~$I$ decreases the objective, we have $w\cdot y^*<w\cdot x$ and therefore~$\mu<a$.
This completes the proof of the lemma.
\end{proof}

A similar reduction from problem~C to problem~LOP was proved in~\cite{MR4795009}.

\begin{lemma}[{\cite[Lemma~21]{MR4795009}}]
\label{lem:algoC-LOP}
Suppose that problem~\upright{LOP} with weight set $W=\{-1,+1\}$ can be solved in time~$t_{\upright{LOP}}=\Omega(n)$.
Then problem~\upright{C} can be solved in time~$\cO(t_{\upright{LOP}})$.
\end{lemma}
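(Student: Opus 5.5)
We reduce problem~C to a single call of problem~LOP whose constraints pin down the suffix condition $\lambda(x,y)=\beta$, and whose weights make the objective an affine function of the Hamming distance $d(x,y)$. Given $x\in X$ and $\beta\in[n]$ as input for problem~C, we prescribe the suffix of~$y$ beyond~$\beta$ and force a flip at position~$\beta$. Concretely, for $b\in\{0,1\}$ we set
\[ P_b:=\{i>\beta\mid x_i=b\}\cup\{\beta \mid x_\beta=1-b\}. \]
These sets are disjoint. The feasible $y\in X$ with $y_{P_0}=0$ and $y_{P_1}=1$ are then exactly the $y\in X$ with $y_j=x_j$ for all $j>\beta$ and $y_\beta\neq x_\beta$. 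In other words, they are exactly the $y\in X-x$ with $\lambda(x,y)=\beta$. The exclusion $y\neq x$ is automatic because $y_\beta\neq x_\beta$.

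For the weights we take $w_i:=+1$ if $x_i=0$ and $w_i:=-1$ if $x_i=1$, for all $i\in[n]$, so that $w\in\{-1,+1\}^n$. The key identity to verify is that for every $y\in\{0,1\}^n$,
\[ w\cdot y = d(x,y) - \|x\|_1 , \]
where $\|x\|_1$ is the number of 1-entries of~$x$. To check it, split the sum over~$i$. If $x_i=0$, the term is $y_i$, which equals $[y_i\neq x_i]$. If $x_i=1$, the term is $-y_i$, which equals $[y_i\neq x_i]-1$. Summing over all~$i$ gives the identity.

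Because the shift $\|x\|_1$ is independent of~$y$, minimizing $w\cdot y$ over the feasible set is the same as minimizing $d(x,y)$ over $\{y\in X-x\mid\lambda(x,y)=\beta\}$. Hence the LOP minimizer set coincides with~$N$, and any element returned by the LOP oracle solves problem~C. The LOP instance is never infeasible, since by assumption $N\neq\emptyset$.

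Building $w$, $P_0$ and $P_1$ takes $\cO(n)$ time, so the total time is $\cO(n)+t_{\upright{LOP}}=\cO(t_{\upright{LOP}})$ by the assumption $t_{\upright{LOP}}=\Omega(n)$.

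There is no serious obstacle in this argument. The one step needing care is the bookkeeping showing that the prescriptions capture $\lambda(x,y)=\beta$ exactly, namely that the suffix beyond~$\beta$ is fixed and position~$\beta$ is flipped. The other point to respect is that the weights may use only $\pm1$, with no zero entries. This is satisfied because the Hamming-distance objective assigns a nonzero weight to every coordinate; coordinates that are prescribed merely contribute constants to the objective.
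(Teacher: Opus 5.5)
Your proof is correct and is essentially the standard reduction behind the cited \cite[Lemma~21]{MR4795009}. You prescribe $y_j=x_j$ for $j>\beta$ and $y_\beta=1-x_\beta$, and choose $w_i=1-2x_i\in\{-1,+1\}$ so that $w\cdot y=d(x,y)-\|x\|_1$, which makes the LOP minimizers exactly the set~$N$ (the paper itself gives no proof and simply cites that lemma).
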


\subsection{Main result}

The following is the main result of this paper, improving upon~\cite[Theorem~22]{MR4795009} by a $\log n$ factor.
The theorem asserts that efficiently solving prescription optimization on~$X$ yields an efficient algorithm for computing a Hamilton path on the skeleton of~$\conv(X)$.

\begin{theorem}
\label{thm:main}
Let $X\seq\{0,1\}^n$ and suppose that problem~\upright{LOP} with weight set $W=\{-1,0,+1\}$ can be solved in time~$t_{\upright{LOP}}=\Omega(n)$.
Then for every tiebreaking rule and every initial vertex~$\tx$, Algorithm~\upright{P\ppp{}} computes a genlex Hamilton path on the skeleton of~$\conv(X)$ starting at~$\tx$ with amortized delay~$\cO(t_{\upright{LOP}})$.
\end{theorem}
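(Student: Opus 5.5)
Correctness (that Algorithm~P\ppp{} outputs a genlex Hamilton path on the skeleton of $\conv(X)$ starting at~$\tx$) is exactly Theorem~\ref{thm:algo}, so the plan is to bound the running time only. I would count the total work over the whole run and divide by $|X|$, using Lemma~\ref{lem:algoAp-LOP} for every evaluation of step~B1 and Lemma~\ref{lem:algoC-LOP} for every evaluation of step~P4. Note that LOP with weight set $\{-1,+1\}$ is a special case of weight set $\{-1,0,+1\}$, so both lemmas apply with the same $t_{\upright{LOP}}$. Each call of $\branching()$ costs one instance of problem~A\pppm{} plus $\cO(n)$ overhead, and $\cO(n)=\cO(t_{\upright{LOP}})$. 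Each iteration of the main loop costs one instance of problem~C plus $\cO(n)$ for copying $x\gets y$ and for the stack pop. Hence the total time is $\cO(t_{\upright{LOP}})$ times the number of loop iterations plus the number of calls to $\branching()$.

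The main loop runs exactly $|X|$ times, since each iteration visits a new vertex. The key step is to bound the number of calls of $\branching()$. The calls form recursion trees, one rooted at each call made in P1 or P6. In such a tree, every call that finds $X(x,I)\neq\emptyset$ pushes exactly one element and makes exactly two recursive calls. Every call that finds $X(x,I)=\emptyset$ is a leaf. So a recursion tree in which $k$ elements are pushed is a full binary tree with $k$ internal nodes, and it has exactly $2k+1$ calls. Summing over the $1+(|X|-1)$ top-level calls (one in P1 and one in each P6 executed), the number of calls is
\[
\sum (2k_j+1)=2\cdot(\#\text{pushes})+|X|.
\]

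It remains to bound the total number of pushes. Every pushed element is popped at most once, and pops happen only in step~P3 with a nonempty stack. Each such pop is followed by a move to a new vertex, so there are at most $|X|-1$ pops. At termination the stack is empty, so every pushed element was eventually popped. Hence $\#\text{pushes}=\#\text{pops}\le |X|-1$. Equivalently, one can check directly that $\sum_{x}|B(x)\cap\cdot|$ over the pushed branchings equals the number of internal nodes of the suffix tree $\cT(X)$ with two children, which is $|X|-1$. Therefore the total number of calls to $\branching()$ is at most $3|X|$.

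Altogether, the total time is $\cO(|X|\cdot t_{\upright{LOP}})$. The individual counts are: $|X|$ solutions of problem~C and at most $3|X|$ solutions of problem~A\pppm{}, which is consistent with the advertised constant~4. This gives amortized delay $\cO(t_{\upright{LOP}})$.

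The main obstacle is the bookkeeping in the second paragraph. One has to make sure that the unsuccessful (empty) calls are charged correctly, via the full-binary-tree count, and that the stack is indeed empty at termination. The latter holds because the P3 check terminates only when $U$ is empty.
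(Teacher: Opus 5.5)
Your proof is correct and is essentially the paper's argument. The paper phrases the count as a charging scheme: the two recursive calls in step~B2 are charged to the corresponding push, and the call in step~P6 together with the instance of problem~C in steps~P4+P5 are charged to the preceding pop. This is exactly your full-binary-tree identity $2\cdot\#\text{pushes}+|X|$ combined with $\#\text{pushes}=\#\text{pops}=|X|-1$, and both give $4(|X|-1)+1$ instances of problem~LOP. Your version is a bit more explicit about the $\cO(n)$ overhead and about why every pushed item is eventually popped. One small point: problem~C is solved $|X|-1$ times rather than $|X|$, since the last iteration stops in step~P3, so your count is a harmless overcount.
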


\begin{proof}
The correctness of the algorithm follows from Theorem~\ref{thm:algo}.
To complete the proof, by Lemmas~\ref{lem:algoAp-LOP} and~\ref{lem:algoC-LOP}, it suffices to argue that only a constant number of instances of problem~LOP must be solved per visited vertex on the Hamilton path.
This amortized analysis works as follows:
Lines~P4+P5 of Algorithm~P\ppp{} amount to solving one instance of problem~C.
Step~B1 of the function~$\branching()$ amounts to solving one instance of problem~A\pppm{}.
For the two recursive calls of $\branching()$ in line~B2, we account the two corresponding instances of problem~A\pppm{} to the operation~$U.\push(\beta)$.
For the call of $\branching()$ in line~P6, we account the one instance of problem~A\pppm{} to the preceding operation~$\beta\leftarrow U.\pop()$.
Similarly, the one instance of problem~C in steps~P4+P5 is accounted to the preceding~$U.\pop()$.
Thus, each item on the stack is accounted with solving three instances of problem~A\pppm{} and one instance of problem~C, i.e., four instances of problem~LOP.
As each item is popped from the stack eventually, and the total number of vertices visited is one more than the number of items on the stack, the total number of instances of the problem~LOP to be solved is exactly $4(|X|-1)+1\leq 4|X|$, where the $+1$ accounts for the one instance of problem~A\pppm{} solved in the initial call of $\branching()$ from line~P1.
\end{proof}

\subsection{Comparison to the algorithm from~\texorpdfstring{\cite{MR4795009}}{[MM24]}}

Algorithm~P\ppp{} follows the same underlying principles as Algorithm~P\sss{} from~\cite{MR4795009}, and it produces the same Hamilton path as output (assuming that the same starting vertex is chosen, and ties are broken consistently).
The main difference, which results in our improved delay, is the information maintained on the stack~$U$.
Recall that our Algorithm~P\ppp{} maintains on the stack~$U$ the set of all unseen branchings in decreasing order.
This is the most direct way to cast the algorithmic idea into a data structure.

Algorithm~P\sss{} from~\cite{MR4795009} also uses a stack, but it contains different information.
Specifically, it partitions the interval~$[1,n]$ into subintervals $I_1,\ldots,I_\ell$, and within each subinterval only maintains the \emph{smallest} unseen branching within that interval.
Thus, auxiliary problem~A from~\cite{MR4795009} asks to compute the smallest branching within an interval.
We replace this by our more lenient auxiliary problem~A\pppm{}, which does not ask for the smallest branching within an interval, but for an arbitrary branching within the interval.
Finding the smallest branching within an interval requires solving logarithmically many instances of problem~LOP, yielding the extra $\log n$ factor in the delay.
Only the information of the smallest branching discovered in the last step of the binary search is maintained, whereas other branchings discovered during the search are discarded.
In contrast, whenever our function $\branching()$ discovers an arbitrary branching~$\beta$, where $\beta=\lambda(x,y)$ for some $y\in X(x,I)$ (step~B1), it is pushed onto the stack~$U$ (step~B2), thus leading to a subsequent visit step.

\section{Computational study}
\label{sec:exp}

In this section we compare the Bussieck-L\"ubbecke~\cite{MR1659922} algorithm, called \texttt{zerone}, Algorithm~P\sss{} from~\cite{MR4795009} and our new Algorithm~P\ppp{} when solving the vertex enumeration problem for a variety of combinatorial 0/1-polytopes coming from set systems, graphs and posets.
We used the existing C/C++ implementations of the first algorithm and prepared implementations of the latter two algorithms in C++/Julia, interfacing them with the CPLEX linear programming (LP) solver.

\newcommand{\gap}{\rule{0pt}{3ex}}

\begin{table}[t!]
\caption{Results of computional experiments.
The bold numbers are the average delays of the three algorithms.}
\label{tab:exp}
\centerline{
\begin{tabular}{lrrrr>{\bf}rr>{\bf}rr>{\bf}r}
\toprule
$X$                                   & $n$ & $m$ & $|X|$ & \multicolumn{2}{l}{\texttt{zerone} \cite{MR1659922}} & \multicolumn{2}{l}{Alg.~P\sss{} \cite{MR4795009}} & \multicolumn{2}{l}{Alg.~P\ppp{}} \\
                                      &    &      &        & $N$ & $\nicefrac{N}{|X|}$ & $N$ & $\nicefrac{N}{|X|}$ & $N$ & $\nicefrac{N}{|X|}$ \\ \midrule
Uniform matroid $U_{10,5}$            & 10 &    2 &    252 &    1343 &   5.3 &    907 &   3.6 &     649 &   2.6 \\
Uniform matroid $U_{16,8}$            & 16 &    2 &  12870 &   71499 &   5.6 &  47831 &   3.7 &   33163 &   2.6 \\
Uniform matroid $U_{20,10}$           & 20 &    2 & 184756 & 1041351 &   5.6 & 695003 &   3.8 &  476388 &   2.6 \\
Uniform matroid $U_{70,1}$            & 70 &    2 &     70 &    4971 &  71.0 &    732 &  10.5 &     209 &   3.0 \\
Uniform matroid $U_{90,1}$            & 90 &    2 &     90 &    8191 &  91.0 &   1012 &  11.2 &     269 &   3.0 \\
\midrule
Vertex covers of $Q_4$                & 16 &   32 &    743 &    3623 &   4.9 &   2946 &   4.0 &    1519 &   2.0 \\
Vertex covers of $C_{14}\times P_2$   & 28 &   42 & 228487 & 1035551 &   4.5 & 890642 &   3.9 &  476528 &   2.1 \\
Vertex covers of $K_{10,10}$          & 20 &  100 &   2047 &   24553 &  12.0 &   8153 &   4.0 &    3071 &   1.5 \\
Vertex covers of $K_{14,14}$          & 28 &  196 &  32767 &  524257 &  16.0 & 130309 &   4.0 &   49151 &   1.5 \\
Matchings of $P_4\times P_4$ \gap     & 24 &   16 &  10012 &   50773 &   5.1 &  40897 &   4.1 &   19472 &   1.9 \\
Matchings of $C_8\times P_2$          & 24 &   16 &  11396 &   70153 &   6.2 &  46668 &   4.1 &   23768 &   2.1 \\
Matchings of $K_9$                    & 36 &  256 &   2620 &   35785 &  13.7 &  15270 &   5.8 &    7209 &   2.8 \\
Matchings of $W_{14}$                 & 26 & 7360 &   3550 &   15827 &   4.5 &  19948 &   5.6 &    6875 &   1.9 \\
Perfect matchings of $K_{6,6}$ \gap   & 36 &   11 &    720 &   19859 &  27.6 &   5045 &   7.0 &    2581 &   3.6 \\
Perfect matchings of $K_{8,8}$        & 64 &   15 &  40320 & 1492959 &  37.0 & 308847 &   7.7 &  145138 &   3.6 \\
Perfect matchings of $P_6\times P_6$  & 60 &   35 &   6728 &  170179 &  25.3 &  49458 &   7.4 &   24799 &   3.7 \\
Perfect matchings of $C_7\times P_2$  & 21 & 7477 &     29 &     729 &  25.1 &    172 &   5.9 &     104 &   3.6 \\
Perfect matchings of $K_{10}$         & 45 &  511 &    945 &   32463 &  34.4 &   7284 &   7.7 &    3092 &   3.3 \\
Perfect matchings of $W_{18}$         & 34 & 124118 &   17 &     885 &  52.1 &    138 &   8.1 &      50 &   2.9 \\
Spanning trees of $K_7$ \gap          & 21 &   99 &  16807 &  137189 &   8.2 &  71038 &   4.2 &   53170 &   3.2 \\
Spanning trees of $K_{4,4}$           & 16 &  121 &   4096 &   26585 &   6.5 &  16371 &   4.0 &   12272 &   3.0 \\
Spanning trees of $Q_3$               & 12 &   67 &    384 &    2383 &   6.2 &   1499 &   3.9 &    1152 &   3.0 \\
Spanning trees of $W_{11}$            & 20 &  902 &  15125 &  131431 &   8.7 &  69907 &   4.6 &   48875 &   3.2 \\
\midrule
Antichains of $L_{50}$                & 50 &   49 &     51 &    2551 &  50.0 &    267 &   5.2 &     100 &   2.0 \\
Antichains of $L_{90}$                & 90 &   89 &     91 &    8191 &  90.0 &    547 &   6.0 &     180 &   2.0 \\
Antichains of $Q_5$                   & 32 &   80 &   7581 &  103923 &  13.7 &  40650 &   5.4 &   20358 &   2.7 \\
Antichains of $\Cr_{12}$              & 24 &   24 & 103682 &  619237 &   6.0 & 394382 &   3.8 &  173520 &   1.7 \\
Antichains of $S_{14}$                & 28 &  182 &  32781 &  524439 &  16.0 & 130399 &   4.0 &   49189 &   1.5 \\
Antichains of $\Pi_4$                 & 24 &   36 &    856 &   11335 &  13.2 &   4304 &   5.0 &    2159 &   2.5 \\
Ideals of $L_{50}$ \gap               & 50 &    1 &     51 &    2551 & 50.0 &     486 &   9.5 &     149 &   2.9 \\
Ideals of $L_{90}$                    & 90 &    1 &     91 &    8191 & 90.0 &    1020 &  11.2 &     269 &   3.0 \\
Ideals of $Q_5$                       & 32 &  120 &   7581 &  103923 & 13.7 &   41298 &   5.4 &   20491 &   2.7 \\
Ideals of $\Cr_{12}$                  & 24 &   24 & 103682 &  619237 &  6.0 &  395512 &   3.8 &  173521 &   1.7 \\
Ideals of $S_{14}$                    & 28 &  182 &  32781 &  524439 & 16.0 &  130423 &   4.0 &   49191 &   1.5 \\
Ideals of $\Pi_4$                     & 24 &   16 &    856 &   11335 & 13.2 &    4393 &   5.1 &    2229 &   2.6 \\
\bottomrule
\end{tabular}}
\end{table}

\pgfdeclarelayer{background}    
\pgfsetlayers{background,main}
\definecolor{lightred}{rgb}{1.0,0.9,0.9}

\begin{figure}
\begin{tikzpicture}
\begin{axis}[
    axis y line*=left,
    xlabel={$n$=dimension},
    ylabel={$\frac{N}{|X|}$},
    ylabel style={color=blue,rotate=-90,at={(-0.01,1.0)}},
    xmin=0, xmax=95,
    ymin=0, ymax=120,
    xtick={0,10,20,30,40,50,60,70,80,90},
    yticklabel style={color=blue},
    ytick={10,20,30,40,50,60,70,80,90,100,110,120},
    ymajorgrids=true,
    xmajorgrids=true,
    grid style=dashed,
    width=14cm,
    height=10cm,
]
\addplot[color=blue,mark=*,mark size=1.4pt,only marks] coordinates {(10,5.3)(16,5.6)(20,5.6)(70,71.0)(90,91.0)(16,4.9)(28,4.5)(20,12.0)(28,16.0)(24,5.1)(24,6.2)(36,13.7)(26,4.5)(36,27.6)(64,37.0)(60,25.3)(21,25.1)(45,34.4)(34,52.1)(21,8.2)(16,6.5)(12,6.2)(20,8.7)(50,50.0)(90,90.0)(32,13.7)(24,6.0)(28,16.0)(24,13.2)(50,50.0)(90,90.0)(32,13.7)(24,6.0)(28,16.0)(24,13.2)};
\begin{pgfonlayer}{background}
\fill[lightred] (50,0) rectangle (950,40);
\end{pgfonlayer}
\end{axis}
\begin{axis}[
    legend style={font=\small},
    legend pos=north west,
    legend cell align={left},
    axis y line*=right,
    ylabel={$\frac{N}{|X|}$},
    ylabel style={color=red,rotate=-90,at={(1.19,1.0)}},
    xmin=0, xmax=95,
    ymin=0, ymax=12,
    xmajorticks=false,
    yticklabel style={color=red},
    ytick={0,1,2,3,4,5,6,7,8,9,10,11,12},
    width=14cm,
    height=10cm,
]
\addlegendimage{mark=*,blue,only marks,mark size=1.5}
\addlegendimage{mark=asterisk,red,only marks,mark size=1.5}
\addlegendimage{mark=square,red,only marks,mark size=1.5}
\legend{\hspace{2mm}\texttt{zerone} \cite{MR1659922},\hspace{2mm}Algorithm~P\sss{} \cite{MR4795009},\hspace{2mm}Algorithm~P\ppp{}}
\addplot[color=red,mark=asterisk,mark size=1.5pt,only marks] coordinates {(10,3.6)(16,3.7)(20,3.8)(70,10.5)(90,11.2)(16,4.0)(28,3.9)(20,4.0)(28,4.0)(24,4.0)(24,4.1)(36,5.8)(26,5.6)(36,7.0)(64,7.7)(60,7.4)(21,5.9)(45,7.7)(34,8.1)(21,4.2)(16,4.0)(12,3.9)(20,4.6)(50,5.2)(90,6.0)(32,5.4)(24,3.8)(28,4.0)(24,5.0)(50,9.5)(90,11.2)(32,5.4)(24,3.8)(28,4.0)(24,5.1)};
\addplot[color=red,mark=square*,mark options={fill=white},mark size=1.5pt,only marks] coordinates {(10,2.6)(16,2.6)(20,2.6)(70,3.0)(90,3.0)(16,2.0)(28,2.1)(20,1.5)(28,1.5)(24,1.9)(24,2.1)(36,2.8)(26,1.9)(36,3.6)(64,3.6)(60,3.7)(21,3.6)(45,3.3)(34,2.9)(21,3.2)(16,3.0)(12,3.0)(20,3.2)(50,2.0)(90,2.0)(32,2.7)(24,1.7)(28,1.5)(24,2.5)(50,2.9)(90,3.0)(32,2.7)(24,1.7)(28,1.5)(24,2.6)};
\end{axis}
\end{tikzpicture}
\caption{Plot of average delays of the three algorithms for all polytopes listed in Table~\ref{tab:exp}.
Note that two different vertical axes are used, as the values for the first algorithm (blue, left axis) are by an order of magnitude (=factor~10) larger than the values for the latter two algorithms (red, right axis).
The shaded region highlights values $\leq 4$, and all values for Algorithm~P\ppp{} are contained in it.}
\label{fig:exp}
\end{figure}

Each polytope is encoded by its hyperplane description $P\coloneq \{x\in\mathbb{R}^n\mid Ax\leq b\}$, where $A\in\mathbb{R}^{m\times n}$ and $b\in\mathbb{R}^m$, and each algorithm reports all elements of the vertex set $X\seq\{0,1\}^n$ of~$P$.
For the latter two algorithms the vertices in~$X$ are listed in the order of a genlex Hamilton path on the polytope~$P$.
We write~$N$ for the number of instances of the problem~\upright{LOP} solved within each algorithm (recall Theorem~\ref{thm:main}), and we use the quantity~$N/|X|$ as a proxy for the average delay.
Of course, in practice, the running time not only depends on the number of optimization instances solved, but also on their size and structure, and on the LP solver being used, but this is something we want to abstract from.
Furthermore, in practice I/O considerations, namely quickly passing data to and from the LP solver, are also of major relevance.

We now describe the combinatorial objects and corresponding 0/1-polytopes considered for our experiments.

For set systems, we consider the base polytope of the uniform matroid~$U_{n,k}$, which has as bases all $k$-element subsets of an $n$-element ground set.
The corresponding hyperplane description is simply $\sum_{i\in[n]}x_i=k$.

Furthermore, we consider vertex covers, matchings, perfect matchings, and spanning trees in graphs, with the corresponding polytopes being the vertex cover polytope~\cite{MR510371}, matching polytope, perfect matching polytope \cite{MR371732} and spanning tree polytope~\cite{MR510371}.
The following notations are used for the different families of graphs.
$K_n$ denotes the complete graph on $n$ vertices.
$K_{a,b}$ denotes the complete bipartite graph with partition classes of size~$a$ and~$b$.
$Q_n$ is the $n$-dimensional hypercube.
$W_n$ is the wheel with $n$ vertices in total, $n-1$ of them with degree~3 and one of them with degree~$n-1$.
$P_n$ and $C_n$ denote the path and cycle on $n$ vertices, respectively.
$G\times H$ denotes the Cartesian product of two graphs~$G$ and~$H$.

Lastly, we consider antichains and ideals (downward closed sets) in posets, with the corresponding polytopes being the order polytope and chain polytope~\cite{MR824105}.
The following notations are used for the different families of posets.
$L_n$ denotes the total order (=chain) on $n$ elements.
$Q_n$ denotes the Boolean lattice of subsets of~$[n]$.
$\Cr_n$ and $S_n$ denote the crown poset and standard example with $2n$ elements, respectively.
$\Pi_n$ denotes the weak Bruhat order on permutations of~$[n]$.

Recall from the proof of Theorem~\ref{thm:main} that if we implement Algorithm~P\ppp{} exactly as described in Section~\ref{sec:algo}, then the resulting number~$N$ will be exactly~$N=4(|X|-1)+1=4|X|-3$, yielding an average delay of exactly~$N/|X|=4-3/|X|<4$.
From the example shown in Figure~\ref{fig:protocol} we notice a straightforward improvement to the function~$\branching(I)$.
Namely, if the interval~$I$ is empty, i.e., $I=\emptyset$, then the condition $X(x,I)=\emptyset$ in line~B1 is trivially satisfied, so we can return immediately without changing the stack, thus saving one call to the LP solver.
The results reported about Algorithm~P\ppp{} in Table~\ref{tab:exp} use this slightly improved version of the algorithm, demonstrating in particular how much can be gained compared to the average delay~$4-3/|X|$ that we would obtain without this improvement.

\subsection{Evaluation}

Figure~\ref{fig:exp} shows the average delays of all algorithms plotted as a function of~$n$, the number of variables of the problem, for all polytopes listed in Table~\ref{tab:exp}.
The average delays of the algorithm \texttt{zerone} are roughly an order of magnitude larger than those of the other two algorithms, which we accomodate by using two different vertical axis labelings.
Furthermore, these delays grow with~$n$, possibly linearly, as expected.
The average delays of Algorithm~P\ppp{} are always strictly smaller than those of Algorithm~P\sss{}, often by a factor of 2--3.
For Algorithm~P\sss{} they grow with~$n$, possibly logarithmically, as expected, whereas for Algorithm~P\ppp{} they remain constant and strictly smaller than~4, often considerably smaller than~4, as predicted by Theorem~\ref{thm:main}.

\bibliographystyle{alpha}
\bibliography{refs}

\end{document}